


\documentclass[prodmode,acmtist]{acmsmall} 

\usepackage{times}

\usepackage{latexsym,bm}
\usepackage[fleqn]{amsmath}
\usepackage{float}
\usepackage{amsfonts}
\usepackage{fancyhdr}
\usepackage{amssymb}

\acmVolume{2}
\acmNumber{3}
\acmArticle{21}
\acmYear{2011}
\acmMonth{5}

\begin{document}

\markboth{B. Li et al.}{A Proof of CORN's Universal Consistency: an Online Appendix}

\title{CORN: Correlation-Driven Nonparametric Learning Approach for Portfolio Selection -- an Online Appendix}

\author{BIN LI
\affil{Nanyang Technological University}
DINGJIANG HUANG
\affil{East China University of Science and Technology and Fudan University}
STEVEN C.H. HOI
\affil{Nanyang Technological University}
}

\begin{abstract}
This appendix proves CORN's universal consistency.
One of Bin's PhD thesis examiner (Special thanks to Vladimir Vovk from Royal Holloway, University of London) suggested that CORN is universal and provided sketch proof of Lemma~\ref{lemma:similarity}, which is the key of this proof.
Based on the proof in~\citeN{GLU06}, we thus prove CORN's universal consistency. Note that the notations in this appendix follows~\citeN{GLU06}.
\end{abstract}

\acmformat{Li, B., Hoi, S. C. H., and Gopalkrishnan, V. 2011. CORN: Correlation-driven nonparametric learning approach
for portfolio selection.}

\begin{bottomstuff}
Author's addresses: B. Li, Nanyang Business School, Nanyang Technological University, Singapore;
D. Huang, Department of Mathematics, East China University of Science and Technology, and School of Computer Science, Fudan University, China;
S. Hoi, School of Computer Engineering, Nanyang Technological University, Singapore.\\
Email: \{binli, chhoi\}@ntu.edu.sg; djhuang@fudan.edu.cn.
\end{bottomstuff}

\maketitle

\section{Proof of CORN's Universal Consistency}

In this note, we give a detailed proof that the portfolio scheme {\rm CORN}~\cite{LHG11} is universal with respect to the class of all ergodic processes.
We first give a concise definition about ``universal" considered in this note.

\begin{definition}
An investment strategy ${\mathbf{B}}$ is called universal with respect a class of stationary and ergodic processes $\{\mathbf{X}_n\}_{-\infty}^{+\infty}$, if for each process in the class,
$$\mathop {\lim }\limits_{n \to \infty } \frac{1}{n}\log {S_n}(\mathbf{B}) = {W^*}\quad  almost\  surely.$$
\end{definition}

Before we give the theorem and its proof, we introduce some necessary lemmas.

\begin{lemma}\label{lemma:shift sequence}\cite{Breiman1957}
Let $Z=\{Z_i\}_{-\infty}^{\infty}$ be a stationary and ergodic process. For each positive integer i, let $T^i$ denote the operator that shifts any sequence $\{...,z_{-1},z_0,z_1,...\}$ by i digits to the left. Let $f_1, f_2,...$ be a sequence of real-valued functions such that $\lim_{n \to \infty} f_n(Z)=f(Z)$ almost surely for some function f. Assume that $\mathbb{E} \sup_n |f_n(Z)| < \infty$. Then
$$\lim \limits_{n \to \infty} \frac{1}{n} \sum \limits_{i=1}^n f_i(T^iZ)=\mathbb{E}f(Z) \quad almost\ surely.$$
\end{lemma}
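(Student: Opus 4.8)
The plan is to reduce this statement to the classical Birkhoff pointwise ergodic theorem by isolating a \emph{fixed} integrable function and controlling the varying remainder uniformly. First I would record that the limit $f$ is integrable: since $f=\lim_n f_n$ pointwise almost surely, we have $|f|\le\sup_n|f_n|$ a.s., so $\mathbb{E}|f(Z)|\le\mathbb{E}\sup_n|f_n(Z)|<\infty$ by hypothesis. This lets me split the Cesàro average as
$$\frac{1}{n}\sum_{i=1}^n f_i(T^iZ)=\frac{1}{n}\sum_{i=1}^n f(T^iZ)+\frac{1}{n}\sum_{i=1}^n\bigl(f_i-f\bigr)(T^iZ),$$
and apply Birkhoff's theorem to the first sum, which converges a.s. to $\mathbb{E}f(Z)$ because the shift $T$ is measure-preserving and ergodic for the stationary ergodic process $Z$. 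It then remains to drive the second (error) term to $0$ almost surely.

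To control the error term I would introduce the tail-supremum function $g_k:=\sup_{i\ge k}|f_i-f|$. Two facts drive the argument: (i) $g_k\downarrow 0$ a.s. as $k\to\infty$, which is just a restatement of $f_n\to f$; and (ii) $g_k\le 2\sup_n|f_n|$, which is integrable, so dominated convergence gives $\mathbb{E}g_k(Z)\to 0$. For each fixed $k$ I would use that $|f_i-f|(T^iZ)\le g_k(T^iZ)$ whenever $i\ge k$, and split the error sum at index $k$:
$$\frac{1}{n}\sum_{i=1}^n\bigl|f_i-f\bigr|(T^iZ)\le\frac{1}{n}\sum_{i=1}^{k-1}\bigl|f_i-f\bigr|(T^iZ)+\frac{1}{n}\sum_{i=1}^{n}g_k(T^iZ).$$
For fixed $k$ the first (finite) sum vanishes as $n\to\infty$, while Birkhoff applied to the integrable function $g_k$ shows the second sum converges a.s. to $\mathbb{E}g_k(Z)$. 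Hence $\limsup_n\frac{1}{n}\sum_{i=1}^n|f_i-f|(T^iZ)\le\mathbb{E}g_k(Z)$ almost surely, for every $k$.

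Finally I would let $k\to\infty$. The left-hand side does not depend on $k$, while the right-hand side $\mathbb{E}g_k(Z)\to 0$ by fact (ii), so the limsup is $0$ and the error term converges to $0$ a.s. Combined with the Birkhoff limit of the first sum, this yields $\frac{1}{n}\sum_{i=1}^n f_i(T^iZ)\to\mathbb{E}f(Z)$ almost surely, as claimed. A minor bookkeeping point is to take the intersection, over a countable set of indices $k$ and over the fixed function $f$, of the full-measure sets on which Birkhoff holds and on which $g_k\to 0$; this keeps all exceptional events inside a single null set.

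The main obstacle is precisely the non-uniformity buried in the error term: each summand involves a \emph{different} function $f_i$ evaluated at the shifted point $T^iZ$, so no single application of Birkhoff's theorem is available for $\frac1n\sum_i f_i(T^iZ)$ directly. The tail-supremum $g_k$ together with the domination hypothesis $\mathbb{E}\sup_n|f_n(Z)|<\infty$ is exactly the device that converts the merely pointwise convergence $f_n\to f$ into the uniform-in-tail, integrable control needed to absorb the varying functions into one ergodic average and push the remainder to zero.
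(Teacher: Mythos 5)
Your proof is correct. Note first that the paper does not actually prove this lemma: it is quoted as a known result with a citation to Breiman (1957), so there is no in-paper argument to compare against. Your argument is essentially the classical one behind that citation. Breiman's original proof (and the versions reproduced in Algoet's and Gy\"orfi et al.'s work) is a sandwich argument: for $i\ge k$ one bounds $f_i(T^iZ)$ between $\inf_{j\ge k}f_j(T^iZ)$ and $\sup_{j\ge k}f_j(T^iZ)$, applies Birkhoff to these two fixed integrable functions, and then lets $k\to\infty$ via dominated convergence, using $\mathbb{E}\sup_n|f_n(Z)|<\infty$ for domination. You instead center the tail control at the limit, working with $g_k=\sup_{i\ge k}|f_i-f|$ and showing the error average $\frac1n\sum_{i=1}^n|f_i-f|(T^iZ)$ has $\limsup$ at most $\mathbb{E}g_k(Z)\to 0$; this handles the upper and lower bounds in one stroke and makes the reduction to Birkhoff plus dominated convergence slightly cleaner, at the cost of needing the (easy) observation that $f$ itself is integrable. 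All the delicate points are handled correctly: measurability and integrability of $g_k$, a.s. finiteness of the finitely many initial terms $|f_i-f|(T^iZ)$ so that the truncated sum vanishes after dividing by $n$, and the countable intersection of exceptional null sets over $k$. So your proposal is a complete and valid proof of the cited lemma, equivalent in substance to Breiman's.
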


\begin{lemma}\label{lemma:portfolio return}\cite{Algoet1988}
Let $\mathbf{Q}_{n \in \mathcal{N} \cup \{\infty\}}$ be a family of regular probability distributions over the set $\mathbb{R}_+^d$ of all market vectors such that $E\{|\log U_n^{(j)}|\}<\infty$ for any coordinate of a random market vector $\mathbf{U}_n=(U_n^{(1)},...,U_n^{(d)})$ distributed according to $\mathbf{Q}_n$. In addition, let $\mathbf{B}^*(\mathbf{Q}_n)$ be the set of all log-optimal portfolios with respect to $\mathbf{Q}_n$, that is, the set of all portfolios {\bf b} that attain $\max_{\mathbf{b} \in \Delta_d} E\{\log\langle \mathbf{b},\mathbf{U}_n\rangle\}$. Consider an arbitrary sequence $\mathbf{b}_n \in \mathbf{B}^*(\mathbf{Q}_n)$. If
$$\mathbf{Q}_n \to \mathbf{Q}_\infty \quad weakly\ as\ n \to \infty,$$
then, for $\mathbf{Q}_\infty$-alomst all {\bf u},
$$\lim \limits_{n \to \infty}\langle  \mathbf{b}_n,\mathbf{u}\rangle \to \langle \mathbf{b}^*,\mathbf{u}\rangle,$$
where the right-hand side is constant as ${\bf b^*}$ ranges over $B^*(\mathbf{Q}_\infty)$.
\end{lemma}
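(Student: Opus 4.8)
The plan is to combine a strict-concavity argument (to pin down the limiting value) with a compactness-plus-semicontinuity argument (to locate the limit points of the maximizers). Write $F_{\mathbf{Q}}(\mathbf{b})=E_{\mathbf{Q}}\{\log\langle\mathbf{b},\mathbf{U}\rangle\}$ and $W^*(\mathbf{Q})=\max_{\mathbf{b}\in\Delta_d}F_{\mathbf{Q}}(\mathbf{b})$; the moment hypothesis makes $F_{\mathbf{Q}}$ a concave, upper semicontinuous function on the compact simplex $\Delta_d$ whose maximum $W^*(\mathbf{Q})$ is attained on a nonempty, compact, convex set $B^*(\mathbf{Q})$. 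First I would establish that the scalar $\langle\mathbf{b}^*,\mathbf{u}\rangle$ does not depend on which $\mathbf{b}^*\in B^*(\mathbf{Q}_\infty)$ is chosen, for $\mathbf{Q}_\infty$-almost every $\mathbf{u}$. Given two maximizers $\mathbf{b}_1^*,\mathbf{b}_2^*$, their midpoint $\mathbf{b}=\tfrac12(\mathbf{b}_1^*+\mathbf{b}_2^*)$ satisfies $\langle\mathbf{b},\mathbf{u}\rangle=\tfrac12(\langle\mathbf{b}_1^*,\mathbf{u}\rangle+\langle\mathbf{b}_2^*,\mathbf{u}\rangle)$, and the strict concavity of $\log$ gives the pointwise bound $\log\langle\mathbf{b},\mathbf{u}\rangle\ge\tfrac12\log\langle\mathbf{b}_1^*,\mathbf{u}\rangle+\tfrac12\log\langle\mathbf{b}_2^*,\mathbf{u}\rangle$ with equality only where $\langle\mathbf{b}_1^*,\mathbf{u}\rangle=\langle\mathbf{b}_2^*,\mathbf{u}\rangle$. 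Integrating yields $F_{\mathbf{Q}_\infty}(\mathbf{b})\ge W^*(\mathbf{Q}_\infty)$, and since the left side cannot exceed the maximum, equality is forced, so the gap integrand vanishes and $\langle\mathbf{b}_1^*,\mathbf{u}\rangle=\langle\mathbf{b}_2^*,\mathbf{u}\rangle$ holds on a single full-measure set $\Omega_0$.

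Next I would reduce the stated limit to a statement about limit points of the maximizing sequence. Because $\Delta_d$ is compact, it suffices to show that every subsequential limit $\mathbf{b}_0$ of $\{\mathbf{b}_n\}$ lies in $B^*(\mathbf{Q}_\infty)$. Indeed, fix $\mathbf{u}\in\Omega_0$; the scalar sequence $\langle\mathbf{b}_n,\mathbf{u}\rangle$ is bounded, and any of its subsequential limits equals $\langle\mathbf{b}_0,\mathbf{u}\rangle$ for some limit point $\mathbf{b}_0$ of $\{\mathbf{b}_n\}$ (extract a sub-subsequence along which $\mathbf{b}_n$ itself converges). If each such $\mathbf{b}_0$ is $\mathbf{Q}_\infty$-optimal, then by the previous paragraph $\langle\mathbf{b}_0,\mathbf{u}\rangle$ equals the common value $\langle\mathbf{b}^*,\mathbf{u}\rangle$, so all subsequential limits of $\langle\mathbf{b}_n,\mathbf{u}\rangle$ coincide and the whole sequence converges to $\langle\mathbf{b}^*,\mathbf{u}\rangle$.

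It remains to prove that a limit $\mathbf{b}_{n_k}\to\mathbf{b}_0$ (with $\mathbf{Q}_{n_k}\to\mathbf{Q}_\infty$ weakly) forces $\mathbf{b}_0\in B^*(\mathbf{Q}_\infty)$. Starting from the optimality inequality $F_{\mathbf{Q}_{n_k}}(\mathbf{b}_{n_k})\ge F_{\mathbf{Q}_{n_k}}(\mathbf{c})$ for an arbitrary fixed $\mathbf{c}\in\Delta_d$, I would pass to the limit using two semicontinuity facts: the upper semicontinuity $\limsup_k F_{\mathbf{Q}_{n_k}}(\mathbf{b}_{n_k})\le F_{\mathbf{Q}_\infty}(\mathbf{b}_0)$ on the left, and the lower semicontinuity $\liminf_k F_{\mathbf{Q}_{n_k}}(\mathbf{c})\ge F_{\mathbf{Q}_\infty}(\mathbf{c})$ on the right, which chain to give $F_{\mathbf{Q}_\infty}(\mathbf{b}_0)\ge F_{\mathbf{Q}_\infty}(\mathbf{c})$ for every $\mathbf{c}$, i.e. $\mathbf{b}_0\in B^*(\mathbf{Q}_\infty)$. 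This passage is the main obstacle, since $\log\langle\mathbf{b},\mathbf{u}\rangle$ is unbounded and weak convergence only controls integrals of bounded continuous test functions. I would handle the positive part with the uniform domination $(\log\langle\mathbf{b},\mathbf{u}\rangle)_+\le\log_+\!\big(\sum_j u^{(j)}\big)$, which is independent of $\mathbf{b}$ and integrable under the moment hypothesis, and the negative part by truncating $\log$ from below at level $-M$, applying the portmanteau theorem to the resulting bounded continuous functions, and then letting $M\to\infty$. Keeping the negative tail uniformly small across the $\mathbf{Q}_{n_k}$ is exactly where the condition $E|\log U_n^{(j)}|<\infty$, together with the regularity of the $\mathbf{Q}_n$, must be invoked, and this uniform-integrability bookkeeping is the delicate heart of the argument.
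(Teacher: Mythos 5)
Your first two steps are sound and standard: the strict-concavity argument showing $\langle\mathbf{b}^*,\mathbf{u}\rangle$ is $\mathbf{Q}_\infty$-a.s.\ constant over $B^*(\mathbf{Q}_\infty)$, and the compactness reduction to showing that every subsequential limit $\mathbf{b}_0$ of $\{\mathbf{b}_n\}$ is $\mathbf{Q}_\infty$-log-optimal. (For reference, the paper itself offers no proof of this lemma -- it is quoted from Algoet (1988) -- so the comparison below is with the argument in that literature.) The genuine gap is in your third step: neither of the two semicontinuity inequalities you need follows from the hypotheses, and in fact both can fail. The moment condition $E\{|\log U_n^{(j)}|\}<\infty$ holds \emph{per measure}; it gives no uniform integrability across the family $\{\mathbf{Q}_{n}\}$, and ``regularity'' (Borel regularity) contributes nothing quantitative. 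Concretely, take $\mu$ nice with unique interior optimizer and set $\mathbf{Q}_n=(1-\tfrac1n)\mu+\tfrac1n\delta_{(e^{n^2},\,e^{-n^2})}$: then $\mathbf{Q}_n\to\mu$ weakly, each $\mathbf{Q}_n$ satisfies the moment condition, yet $F_{\mathbf{Q}_n}(\mathbf{b}_n)\ge F_{\mathbf{Q}_n}(\mathbf{b}^\mu)\to+\infty$, so your upper-semicontinuity link $\limsup_k F_{\mathbf{Q}_{n_k}}(\mathbf{b}_{n_k})\le F_{\mathbf{Q}_\infty}(\mathbf{b}_0)$ is false (note the lemma's conclusion still holds there, since the tiny mass perturbs the gradient only by $O(1/n)$ and $\mathbf{b}_n\to\mathbf{b}^\mu$). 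Symmetrically, $\mathbf{Q}_n=(1-\tfrac1n)\mu+\tfrac1n\delta_{(e^{-n^2},\,e^{-n^2})}$ makes $F_{\mathbf{Q}_n}(\mathbf{c})\to-\infty$ for every fixed $\mathbf{c}$, killing your lower-semicontinuity link. So the ``uniform-integrability bookkeeping'' you defer to is not merely delicate; it is unobtainable from the stated hypotheses, and the whole chain $F_{\mathbf{Q}_\infty}(\mathbf{b}_0)\ge\limsup_k F_{\mathbf{Q}_{n_k}}(\mathbf{b}_{n_k})\ge\liminf_k F_{\mathbf{Q}_{n_k}}(\mathbf{c})\ge F_{\mathbf{Q}_\infty}(\mathbf{c})$ collapses.

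The known proof (Algoet--Cover) avoids integrating the unbounded $\log$ altogether by working with the Kuhn--Tucker characterization of log-optimality: $\mathbf{b}_n\in B^*(\mathbf{Q}_n)$ if and only if $E_{\mathbf{Q}_n}\{\langle\mathbf{c},\mathbf{U}\rangle/\langle\mathbf{b}_n,\mathbf{U}\rangle\}\le 1$ for all $\mathbf{c}\in\Delta_d$. These integrands are \emph{nonnegative}, and if $\mathbf{b}_{n_k}\to\mathbf{b}_0$ they converge continuously (on the open orthant, where $\langle\mathbf{b}_0,\mathbf{u}\rangle\ge\min_j u^{(j)}>0$), so Fatou's lemma for weakly convergent measures (e.g.\ via Skorokhod representation) gives $E_{\mathbf{Q}_\infty}\{\langle\mathbf{c},\mathbf{U}\rangle/\langle\mathbf{b}_0,\mathbf{U}\rangle\}\le\liminf_k E_{\mathbf{Q}_{n_k}}\{\langle\mathbf{c},\mathbf{U}\rangle/\langle\mathbf{b}_{n_k},\mathbf{U}\rangle\}\le 1$; then Jensen's inequality, $E\log(\langle\mathbf{c},\mathbf{U}\rangle/\langle\mathbf{b}_0,\mathbf{U}\rangle)\le\log E\{\langle\mathbf{c},\mathbf{U}\rangle/\langle\mathbf{b}_0,\mathbf{U}\rangle\}\le 0$, shows $\mathbf{b}_0\in B^*(\mathbf{Q}_\infty)$. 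If you replace your semicontinuity step by this ratio argument, the rest of your outline (steps one and two) completes the proof correctly.
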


\begin{lemma}\label{lemma:expectation field}\cite{Algoet1988}
Let {\bf X } be a random market vector defined on a probability space$(\Omega,\mathcal{F},\mathbb{P})$ satisfying $E\{|\log X^{(j)}|\}<\infty$. If $\mathcal{F}_k$ is an increasing sequence of sub-$\sigma$-fields of $\mathcal{F}$ with
$$\mathcal{F}_k \nearrow \mathcal{F}_\infty \subseteq \mathcal{F},$$
then
$$\mathbb{E}\left\{\max \limits_\mathbf{b} \mathbb{E}[\log \langle \mathbf{b}, \mathbf{X}\rangle | \mathcal{F}_k]\right\} \nearrow \mathbb{E}\left\{\max \limits_\mathbf{b} \mathbb{E}[\log \langle \mathbf{b}, \mathbf{X}\rangle | \mathcal{F}_\infty]\right\},$$
as $k \to \infty$ where the maximum on the left-hand side is taken on over all $\mathcal{F}_k$-measurable functions {\bf b} and the maximum on the right-hand side is taken on over all $\mathcal{F}_\infty$-measurable functions {\bf b}.
\end{lemma}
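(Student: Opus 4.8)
The plan is to reduce both conditional-maximum expectations to \emph{unconditional} optimal expected log-returns over nested families of admissible portfolios, after which the statement splits into an easy monotonicity-and-upper-bound part and a martingale-approximation part supplying the matching lower bound. Throughout I write $W_k=\max_{\mathbf{b}}\mathbb{E}[\log\langle\mathbf{b},\mathbf{X}\rangle\mid\mathcal{F}_k]$ with the maximum over all $\mathcal{F}_k$-measurable $\Delta_d$-valued $\mathbf{b}$, and likewise $W_\infty$ for $\mathcal{F}_\infty$; the goal is $\mathbb{E}\{W_k\}\nearrow\mathbb{E}\{W_\infty\}$.

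First I would record the two-sided envelope $\log\min_j X^{(j)}\le\log\langle\mathbf{b},\mathbf{X}\rangle\le\log\max_j X^{(j)}$, valid for every $\mathbf{b}\in\Delta_d$ because $\langle\mathbf{b},\mathbf{X}\rangle$ is a convex combination of the coordinates $X^{(j)}$; consequently $|\log\langle\mathbf{b},\mathbf{X}\rangle|\le\sum_j|\log X^{(j)}|\in L^1$ by hypothesis, and this integrable dominating function legitimises every interchange of limits and expectations below. Next I would prove the key identity
$$\mathbb{E}\{W_k\}=\max_{\mathbf{b}}\mathbb{E}\{\log\langle\mathbf{b},\mathbf{X}\rangle\},$$
the maximum on the right again being over $\mathcal{F}_k$-measurable $\mathbf{b}$. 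The ``$\le$'' direction uses a measurable selection: for each $\omega$ the concave continuous map $\mathbf{b}\mapsto\mathbb{E}[\log\langle\mathbf{b},\mathbf{X}\rangle\mid\mathcal{F}_k](\omega)$ attains its maximum on the compact simplex, and the maximiser $\widehat{\mathbf{b}}_k$ can be chosen $\mathcal{F}_k$-measurable, so that taking the unconditional expectation of $W_k=\mathbb{E}[\log\langle\widehat{\mathbf{b}}_k,\mathbf{X}\rangle\mid\mathcal{F}_k]$ realises the right-hand side; the ``$\ge$'' direction is just the tower property for any fixed $\mathcal{F}_k$-measurable competitor. Since $\mathcal{F}_k\subseteq\mathcal{F}_{k+1}\subseteq\mathcal{F}_\infty$ the admissible sets are nested, so $\mathbb{E}\{W_k\}$ is nondecreasing and bounded above by $\mathbb{E}\{W_\infty\}$; it therefore increases to some limit $L\le\mathbb{E}\{W_\infty\}$.

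The substance of the lemma, and the step I expect to be the main obstacle, is the reverse inequality $L\ge\mathbb{E}\{W_\infty\}$. Here I would fix an $\mathcal{F}_\infty$-measurable log-optimal $\mathbf{b}^\ast$ attaining $W_\infty$ and approximate it from inside the filtration by $\mathbf{b}_k:=\mathbb{E}[\mathbf{b}^\ast\mid\mathcal{F}_k]$. Each $\mathbf{b}_k$ is $\mathcal{F}_k$-measurable and, by convexity of $\Delta_d$, still lies in $\Delta_d$, hence is admissible. Lévy's upward martingale convergence theorem gives $\mathbf{b}_k\to\mathbb{E}[\mathbf{b}^\ast\mid\mathcal{F}_\infty]=\mathbf{b}^\ast$ almost surely componentwise, so $\langle\mathbf{b}_k,\mathbf{X}\rangle\to\langle\mathbf{b}^\ast,\mathbf{X}\rangle$ and, by continuity of $\log$, $\log\langle\mathbf{b}_k,\mathbf{X}\rangle\to\log\langle\mathbf{b}^\ast,\mathbf{X}\rangle$ almost surely. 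Dominated convergence against the envelope then yields $\mathbb{E}\{\log\langle\mathbf{b}_k,\mathbf{X}\rangle\}\to\mathbb{E}\{\log\langle\mathbf{b}^\ast,\mathbf{X}\rangle\}=\mathbb{E}\{W_\infty\}$, while admissibility of $\mathbf{b}_k$ gives $\mathbb{E}\{\log\langle\mathbf{b}_k,\mathbf{X}\rangle\}\le\mathbb{E}\{W_k\}\le L$ for every $k$. Letting $k\to\infty$ forces $\mathbb{E}\{W_\infty\}\le L$, and together with the upper bound this gives $L=\mathbb{E}\{W_\infty\}$, completing the monotone convergence. The two points needing genuine care are the measurable selection of $\widehat{\mathbf{b}}_k$ and the verification that the conditional-expectation portfolios remain in the simplex; both are standard, but they are where a careless argument would break.
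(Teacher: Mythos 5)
Your proposal is correct, but there is nothing in the paper to compare it against: the paper states this lemma as a known result, citing Algoet (1988), and never proves it (it is only \emph{used} in part (iii) of the proof of Theorem~\ref{theorem:CORN Universal}, after the submartingale convergence step). Judged on its own merits, your argument is a sound, self-contained proof. The reduction of both sides to unconditional optima, $\mathbb{E}\{W_k\}=\max_{\mathbf{b}}\mathbb{E}\{\log\langle\mathbf{b},\mathbf{X}\rangle\}$ over $\mathcal{F}_k$-measurable $\mathbf{b}$, is the right normalization and makes monotonicity and the upper bound $L\le\mathbb{E}\{W_\infty\}$ immediate from nestedness of the admissible classes; the lower bound via $\mathbf{b}_k:=\mathbb{E}[\mathbf{b}^\ast\mid\mathcal{F}_k]$ is the genuinely clever step, since conditional expectation preserves the simplex, L\'evy's upward theorem gives $\mathbf{b}_k\to\mathbf{b}^\ast$ a.s., and your integrable envelope $\sum_j|\log X^{(j)}|$ justifies the dominated-convergence passage $\mathbb{E}\{\log\langle\mathbf{b}_k,\mathbf{X}\rangle\}\to\mathbb{E}\{W_\infty\}$. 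Two technical points you correctly flag but should spell out in a final write-up: (a) the pointwise maximization defining $W_k(\omega)$ must be set up through a regular conditional distribution of $\mathbf{X}$ given $\mathcal{F}_k$ (available since $\mathbb{R}_+^d$ is Polish), under which $\mathbf{b}\mapsto\int\log\langle\mathbf{b},\mathbf{x}\rangle\,\mathbb{P}_{\mathbf{X}|\mathcal{F}_k}(d\mathbf{x},\omega)$ is finite and, by dominated convergence against the same envelope, continuous on the compact simplex, so the maximum is attained and a measurable maximizer exists by the measurable maximum theorem; and (b) the same construction is needed to produce the $\mathcal{F}_\infty$-measurable optimizer $\mathbf{b}^\ast$ whose existence your lower-bound argument presupposes. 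With those details filled in, the proof is complete.
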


\begin{lemma}\label{lemma:lebesgue denisty theorem}
Let $\mu$ be the Lebesgue measure on the Euclidean space $\mathbf{R}^n$ and A be a Lebesgue measurable subset of $\mathbf{R}^n$. Define the approximate density of A in a $\varepsilon$-neighborhood of a point x in $\mathbf{R}^n$ as
$$d_\varepsilon\left(x\right)=\frac{\mu\left(A\cap B_\varepsilon\left(x\right)\right)}{\mu\left(B_\varepsilon\left(x\right)\right)},$$
where $B_\varepsilon$ denotes the closed ball of radius $\varepsilon$ centered at x.
Then for almost every point x of A the density
$$d(x)=\lim \limits_{\varepsilon \to 0}d_\varepsilon(x)$$
exists and is equal to 1.
\end{lemma}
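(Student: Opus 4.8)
The plan is to recognize this statement as the classical Lebesgue Density Theorem and to deduce it from the Lebesgue Differentiation Theorem applied to the indicator function of $A$. Write $f = \mathbf{1}_A$, which is locally integrable because $A$ is measurable. The quantity $d_\varepsilon(x)$ is exactly the ball average
$$d_\varepsilon(x) = \frac{1}{\mu(B_\varepsilon(x))} \int_{B_\varepsilon(x)} f(y)\, d\mu(y),$$
so the assertion that $d(x) = 1$ for almost every $x \in A$ is equivalent to the statement that these averages converge to $f(x)$ at almost every point, combined with the observation that $f(x) = 1$ precisely on $A$. Hence it suffices to establish that for almost every $x \in \mathbf{R}^n$,
$$\lim_{\varepsilon \to 0} \frac{1}{\mu(B_\varepsilon(x))} \int_{B_\varepsilon(x)} f(y)\, d\mu(y) = f(x).$$

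First I would introduce the Hardy--Littlewood maximal operator
$$Mf(x) = \sup_{\varepsilon > 0} \frac{1}{\mu(B_\varepsilon(x))} \int_{B_\varepsilon(x)} |f(y)|\, d\mu(y)$$
and prove its weak-type $(1,1)$ inequality $\mu(\{x : Mf(x) > \lambda\}) \le C_n \lambda^{-1} \|f\|_1$ for a dimensional constant $C_n$. The engine behind this bound is a Vitali-type covering lemma: from any family of balls covering the super-level set one extracts a countable \emph{disjoint} subfamily whose fixed dilates still cover the set, which converts the uncontrolled overlap of the averaging balls into a summable estimate. I expect this maximal inequality to be the main obstacle, since it is the only genuinely nontrivial quantitative step; everything downstream is soft.

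Next I would run the standard approximation argument. Fix $\lambda > 0$ and $\delta > 0$. Since continuous compactly supported functions are dense in $L^1$, choose continuous $g$ with $\|f - g\|_1 < \delta$ and set $h = f - g$. For the continuous piece the averages converge to $g(x)$ at every point by uniform continuity, so the defect of $f$ is controlled by that of $h$, giving the pointwise bound $\limsup_{\varepsilon \to 0} |d_\varepsilon(x) - f(x)| \le Mh(x) + |h(x)|$. Applying the weak-type inequality to $Mh$ and Markov's inequality to $|h|$ bounds the measure of the set where the limit fails to exist or to equal $f(x)$ by a constant multiple of $\delta/\lambda$. Letting $\delta \to 0$ shows this set is null for each fixed $\lambda$, and taking the countable union over $\lambda = 1/k$ yields almost-everywhere convergence.

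Finally I would specialize to $f = \mathbf{1}_A$: at almost every $x \in A$ we have $f(x) = 1$, whence $d(x) = \lim_{\varepsilon \to 0} d_\varepsilon(x) = 1$, which is the claim. One minor point I would record is that the statement uses \emph{closed} balls, but this is immaterial for the averages since the bounding spheres have Lebesgue measure zero, so $\mu(B_\varepsilon(x))$ and the integrals are unaffected by including or excluding the boundary.
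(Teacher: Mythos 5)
The paper does not actually prove this lemma: it is quoted as the classical Lebesgue density theorem and is invoked later (in part (iii) of the proof of Theorem~\ref{theorem:CORN Universal}) with a pointer to Lemma~24.5 of \cite{Gyorfi2002}. So there is no ``paper's route'' to compare against; your proposal supplies the standard textbook proof, and it is essentially correct: identify $d_\varepsilon(x)$ as the ball average of $f=\mathbf{1}_A$, reduce the claim to the Lebesgue differentiation theorem, prove the weak-type $(1,1)$ bound for the Hardy--Littlewood maximal operator via a Vitali-type covering lemma, and then run the density argument with continuous compactly supported functions. Your closing remarks (that $f(x)=1$ a.e.\ on $A$ gives the conclusion, and that closed versus open balls is immaterial since spheres are null) are also fine.

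One detail you should patch: your approximation step picks continuous $g$ with $\|f-g\|_1<\delta$, which presupposes $f=\mathbf{1}_A\in L^1$, but the lemma allows $\mu(A)=\infty$, in which case $\mathbf{1}_A$ is only \emph{locally} integrable. The standard fix is to localize before approximating: for each integer $R$ apply your argument to $f_R=\mathbf{1}_{A\cap B_R(0)}\in L^1$, observe that for $x$ in the open ball of radius $R/2$ and $\varepsilon<R/2$ the averages of $f$ and $f_R$ over $B_\varepsilon(x)$ coincide, conclude the density statement almost everywhere on $A\cap B_{R/2}(0)$, and take the countable union over $R$. With that one-line repair the proof is complete.
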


\begin{lemma}
\label{lemma:similarity}
The inequality
$$\frac{{{\mathop{\rm cov}} \left(\mathbf{X},{\mathbf{X}'}\right)}}{{\sqrt {Var\left(\mathbf{X}\right)} \sqrt {Var\left({\mathbf{X}'}\right)} }} \ge \rho,$$
which describe the similarity of $\mathbf{X}$ and $\mathbf{X}'$ in CORN strategy, is approximately equivalent to
$$2{\rm Var}\left(\mathbf{X}\right)\left(1-\rho\right) \geq \mathbb{E}\{\left(\mathbf{X}-\mathbf{X}'\right)^2\}.$$
\end{lemma}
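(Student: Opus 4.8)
The plan is to reduce the stated ``approximate equivalence'' to a direct algebraic manipulation of the second-moment quantities, supplemented by two natural statistical approximations that are reasonable because $\mathbf{X}$ and $\mathbf{X}'$ are extracted from the same stationary and ergodic process and are being compared precisely because they are ``similar'' market windows.

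First I would expand the right-hand quantity by decomposing $\mathbf{X}-\mathbf{X}'$ into its second moments. Using the identities $\mathbb{E}[\mathbf{X}^2]=\mathrm{Var}(\mathbf{X})+(\mathbb{E}\mathbf{X})^2$, its analogue for $\mathbf{X}'$, and $\mathbb{E}[\mathbf{X}\mathbf{X}']=\mathrm{cov}(\mathbf{X},\mathbf{X}')+(\mathbb{E}\mathbf{X})(\mathbb{E}\mathbf{X}')$, I obtain the exact identity
$$\mathbb{E}\{(\mathbf{X}-\mathbf{X}')^2\}=\mathrm{Var}(\mathbf{X})+\mathrm{Var}(\mathbf{X}')-2\,\mathrm{cov}(\mathbf{X},\mathbf{X}')+(\mathbb{E}\mathbf{X}-\mathbb{E}\mathbf{X}')^2.$$

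The second step is where the word ``approximately'' enters. Since $\mathbf{X}$ and $\mathbf{X}'$ are drawn from the same stationary process, I would argue that they share (approximately) the same mean and the same variance, so that the mean-difference term $(\mathbb{E}\mathbf{X}-\mathbb{E}\mathbf{X}')^2$ is negligible and $\mathrm{Var}(\mathbf{X}')\approx\mathrm{Var}(\mathbf{X})$. Under these approximations the identity collapses to
$$\mathbb{E}\{(\mathbf{X}-\mathbf{X}')^2\}\approx 2\,\mathrm{Var}(\mathbf{X})-2\,\mathrm{cov}(\mathbf{X},\mathbf{X}').$$
In the same regime the correlation denominator satisfies $\sqrt{\mathrm{Var}(\mathbf{X})}\sqrt{\mathrm{Var}(\mathbf{X}')}\approx\mathrm{Var}(\mathbf{X})$, so the similarity condition $\mathrm{cov}(\mathbf{X},\mathbf{X}')/\big(\sqrt{\mathrm{Var}(\mathbf{X})}\sqrt{\mathrm{Var}(\mathbf{X}')}\big)\ge\rho$ becomes equivalent to $\mathrm{cov}(\mathbf{X},\mathbf{X}')\ge\rho\,\mathrm{Var}(\mathbf{X})$.

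Finally I would combine the two displays: multiplying the covariance bound by $-2$ and adding $2\,\mathrm{Var}(\mathbf{X})$ turns $\mathrm{cov}(\mathbf{X},\mathbf{X}')\ge\rho\,\mathrm{Var}(\mathbf{X})$ into $2\,\mathrm{Var}(\mathbf{X})-2\,\mathrm{cov}(\mathbf{X},\mathbf{X}')\le 2\,\mathrm{Var}(\mathbf{X})(1-\rho)$, and substituting the approximation for $\mathbb{E}\{(\mathbf{X}-\mathbf{X}')^2\}$ then yields the claimed inequality. I expect the main obstacle to be not the algebra, which is routine, but the honest justification of the two approximations: the equality of means and variances is exact only in an idealized stationary limit, and the reverse direction of the ``equivalence'' genuinely requires that the quadratic variance term dominate the discarded $(\mathbb{E}\mathbf{X}-\mathbb{E}\mathbf{X}')^2$ contribution. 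This is precisely why the lemma is stated as an \emph{approximate} rather than an exact equivalence, and I would flag this explicitly rather than claim a tight two-sided bound.
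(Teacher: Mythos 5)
Your proposal is correct and takes essentially the same route as the paper: both expand $\mathbb{E}\{(\mathbf{X}-\mathbf{X}')^2\}$ as ${\rm Var}(\mathbf{X})+{\rm Var}(\mathbf{X}')-2\,{\rm cov}(\mathbf{X},\mathbf{X}')+\left(\mathbb{E}\{\mathbf{X}-\mathbf{X}'\}\right)^2$, rewrite the correlation condition with this identity, and then collapse it to $2{\rm Var}(\mathbf{X})(1-\rho)\geq\mathbb{E}\{(\mathbf{X}-\mathbf{X}')^2\}$ by neglecting the mean-difference term and identifying ${\rm Var}(\mathbf{X}')$ with ${\rm Var}(\mathbf{X})$. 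The only difference is the justification of those approximations --- the paper appeals to the empirical magnitudes of relative-price data (${\rm Var}(\mathbf{X})$ and $\left|\mathbb{E}\{\mathbf{X}-\mathbf{X}'\}\right|$ are both of order $10^{-4}$ to $10^{-3}$, so the squared mean difference is second-order small), whereas you invoke stationarity --- which is a cosmetic rather than structural difference.
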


\begin{proof}
In general, from the covariance ${\rm cov}\left(\mathbf{X},\mathbf{X}'\right)$ it is impossible to derive a topology, since ${\rm cov}\left(\mathbf{X},\mathbf{X}'\right)=1$ doesn't imply that $\mathbb{E}\left\{\left(\mathbf{X}-\mathbf{X}'\right)^2\right\}=0$.
However, because $\mathbf{X}$ and $\mathbf{X}'$ are relative prices, then we have $\mathbb{E}\left\{\left(\mathbf{X}-\mathbf{X}'\right)^2\right\}\approx 0$. For the Euclidean distance, we have that
$$\mathbb{E}\left\{\left(\mathbf{X}-\mathbf{X}'\right)^2\right\}={\rm Var}\left(\mathbf{X}-\mathbf{X}'\right)+\left(\mathbb{E}\left\{\mathbf{X}-\mathbf{X}'\right\}\right)^2 ={\rm Var}\left(\mathbf{X}\right)-2{\rm cov}\left(\mathbf{X},\mathbf{X}'\right)+{\rm Var}\left(\mathbf{X}'\right)+\left(\mathbb{E}\left\{\mathbf{X}-\mathbf{X}'\right\}\right)^2.$$
Thus, the similarity means that
$$\frac{{\rm Var}(\mathbf{X})+{\rm Var}(\mathbf{X}')+(\mathbb{E}\{\mathbf{X}-\mathbf{X}'\})^2-\mathbb{E}\{(\mathbf{X}-\mathbf{X}')^2\}}{{\sqrt {Var(\mathbf{X})} \sqrt {Var({\mathbf{X}'})} }}\geq 2\rho$$
or equivalently,
$${\rm Var}\left(\mathbf{X}\right)+{\rm Var}\left(\mathbf{X}'\right)+\left(\mathbb{E}\left\{\mathbf{X}-\mathbf{X}'\right\}\right)^2-2\rho{\sqrt {Var\left(\mathbf{X}\right)} \sqrt {Var\left({\mathbf{X}'}\right)} } \geq \mathbb{E}\left\{\left(\mathbf{X}-\mathbf{X}'\right)^2\right\}.$$
Since both ${\rm Var}\left(\mathbf{X}\right)$ and $\left|\mathbb{E}\left\{\mathbf{X}-\mathbf{X}'\right\}\right|$ have the same order of magnitude~\footnote{See Bin's thesis, Table 7.7.}, they are in the range $10^{-4}, 10^{-3}$, therefore the previous inequality approximately means that
$$2{\rm Var}\left(\mathbf{X}\right)\left(1-\rho\right) \geq \mathbb{E}\left\{\left(\mathbf{X}-\mathbf{X}'\right)^2\right\}.$$
\end{proof}

\begin{lemma}
\label{lemma:universal}
Assume that ${\bf x_1,x_2,...}$ are the realizations of the random vectors $\bf X_1,X_2,...$ drawn from the vector-valued stationary and ergodic process $\{\mathbf{X}_n\}_{-\infty}^\infty$. The fundamental limits, determined in \cite{Algoet1992,Algoet1994,Algoet1988}, reveal that the so-called {\it log-optimum portfolio} $\mathbf{B}^*=\{\mathbf{b}^*(\cdot)\}$ is the best possible choice. More precisely, in trading period {\it n} let $\mathbf{b}^*(\cdot)$ be such that
$$\mathbb{E}\{ \log \left\langle {{\mathbf{b}^*}(\mathbf{X}_1^{n - 1}),{\mathbf{X}_n}} \right\rangle |\mathbf{X}_1^{n - 1}\}  = \mathop {\max }\limits_{\mathbf{b}( \cdot)} \mathbb{E}\{ \log \left\langle {b(\mathbf{X}_1^{n - 1}),{\mathbf{X}_n}} \right\rangle |\mathbf{X}_1^{n - 1}\}.$$
If $S_n^*=S_n(\mathbf{B}^*)$ denotes the capital achieved by a log-optimum portfolio strategy $\mathbf{B}^*$, after n trading periods, then for any other investment strategy $\mathbf{B}$ with capital $S_n=S_n(\mathbf{B})$ and for any stationary and ergodic process $\{X_n\}_{-\infty}^{\infty}$,
$$\mathop {\lim \sup}\limits_{n \to \infty } \frac{1}{n}\log \frac{{{S_n}}}{{S_n^*}} \le 0 \quad almost \ surely$$
and
$$\mathop {\lim }\limits_{n \to \infty } \frac{1}{n}\log S_n^*=W^* \quad almost \ surely, $$
where
$${W^*} = \mathbb{E}\{ \mathop {\max }\limits_{\mathbf{b}( \cdot )} \mathbb{E}\{ \log \left\langle {\mathbf{b}(\mathbf{X}_{ - \infty }^{ - 1}),{\mathbf{X}_0}} \right\rangle |\mathbf{X}_{ - \infty }^{ - 1}\} \},$$
is the maximal possible growth rate of any investment strategy.
\end{lemma}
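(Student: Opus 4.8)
The plan is to establish the two displayed claims separately: the \emph{optimality} statement $\limsup_n \tfrac1n \log(S_n/S_n^*) \le 0$, and the \emph{growth-rate} statement $\tfrac1n \log S_n^* \to W^*$. Throughout I would write $\mathcal{F}_{n-1} = \sigma(\mathbf{X}_1^{n-1})$ for the information available before period $n$, and use that by construction $\log S_n^* = \sum_{i=1}^n \log\langle \mathbf{b}^*(\mathbf{X}_1^{i-1}), \mathbf{X}_i\rangle$, and analogously for any competing strategy $\mathbf{B}$.

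For the optimality statement, first I would set $V_n = S_n/S_n^*$ and show that $V_n$ is a nonnegative supermartingale with respect to $\{\mathcal{F}_n\}$. The engine is the Kuhn--Tucker characterization of the log-optimal portfolio: since $\mathbf{b}^*(\mathbf{X}_1^{n-1})$ maximizes $\mathbb{E}\{\log\langle \mathbf{b}, \mathbf{X}_n\rangle \mid \mathcal{F}_{n-1}\}$ over the simplex, the first-order conditions give $\mathbb{E}\{X_n^{(j)}/\langle \mathbf{b}^*(\mathbf{X}_1^{n-1}), \mathbf{X}_n\rangle \mid \mathcal{F}_{n-1}\} \le 1$ for every coordinate $j$. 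Taking the convex combination with the weights of any competing $\mathbf{b}(\mathbf{X}_1^{n-1})$ yields $\mathbb{E}\{\langle \mathbf{b}, \mathbf{X}_n\rangle/\langle \mathbf{b}^*, \mathbf{X}_n\rangle \mid \mathcal{F}_{n-1}\} \le 1$, i.e. $\mathbb{E}\{V_n \mid \mathcal{F}_{n-1}\} \le V_{n-1}$. A nonnegative supermartingale converges almost surely to a finite limit, so $\log V_n$ is almost surely bounded above, whence $\limsup_n \tfrac1n \log V_n \le 0$, which is the first claim.

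The core of the lemma is the growth-rate statement, and here the plan is a martingale-plus-ergodic decomposition. Writing $h_i = \max_{\mathbf{b}} \mathbb{E}\{\log\langle \mathbf{b}(\mathbf{X}_1^{i-1}), \mathbf{X}_i\rangle \mid \mathcal{F}_{i-1}\}$ and $d_i = \log\langle \mathbf{b}^*(\mathbf{X}_1^{i-1}), \mathbf{X}_i\rangle - h_i$, the sequence $\{d_i\}$ is a martingale-difference sequence for $\{\mathcal{F}_i\}$, so a strong law for martingales gives $\tfrac1n\sum_{i=1}^n d_i \to 0$ almost surely. For the predictable part I would invoke Breiman's ergodic theorem (Lemma~\ref{lemma:shift sequence}) with $Z = \{\mathbf{X}_n\}$ and $f_i(Z) = \max_{\mathbf{b}} \mathbb{E}\{\log\langle \mathbf{b}(Z_{-(i-1)}^{-1}), Z_0\rangle \mid Z_{-(i-1)}^{-1}\}$: by stationarity $f_i(T^i Z) = h_i$, and as $i \to \infty$ the conditioning $\sigma$-fields increase to $\sigma(\mathbf{X}_{-\infty}^{-1})$, so $f_i$ converges almost surely to $f(Z) = \max_{\mathbf{b}} \mathbb{E}\{\log\langle \mathbf{b}(Z_{-\infty}^{-1}), Z_0\rangle \mid Z_{-\infty}^{-1}\}$, the continuity of the maximal return along the way being supplied by Lemma~\ref{lemma:portfolio return}. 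Lemma~\ref{lemma:shift sequence} then yields $\tfrac1n\sum_{i=1}^n h_i \to \mathbb{E}\{f\}$, and Lemma~\ref{lemma:expectation field} identifies $\mathbb{E}\{f\} = W^*$ (indeed $\mathbb{E}\{f_i\} \nearrow W^*$). Adding the two pieces gives $\tfrac1n \log S_n^* = \tfrac1n\sum_i h_i + \tfrac1n \sum_i d_i \to W^*$ almost surely, which is both the upper and the lower bound at once.

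The main obstacle will be verifying the hypotheses that make this decomposition rigorous. For Breiman's theorem I must check the domination condition $\mathbb{E}\sup_i |f_i| < \infty$; the natural bound $|\log\langle \mathbf{b}, \mathbf{X}_0\rangle| \le \sum_j |\log X_0^{(j)}|$ controls each $f_i$ by a conditional expectation of a fixed integrable function, but the supremum over $i$ is the maximal function of a Doob martingale, and its integrability needs the moment assumption $\mathbb{E}\{|\log X^{(j)}|\} < \infty$ to be pushed slightly (for instance to an $L\log L$-type control via Doob's maximal inequality). Likewise the martingale strong law for $\{d_i\}$ requires a summable-variance or Chow-type condition, again furnished by the logarithmic moment bound. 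Once these integrability checks are in place, the almost-sure convergence $f_i \to f$ together with the monotone identification $\mathbb{E}\{f_i\} \nearrow W^*$ closes the argument and, combined with the supermartingale bound of the first step, proves that $\mathbf{B}^*$ is asymptotically optimal with growth rate $W^*$.
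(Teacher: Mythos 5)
First, a point of reference: the paper itself gives \emph{no} proof of this lemma --- it is imported as a known result of Algoet and Cover (the three Algoet citations), so your attempt can only be measured against the classical arguments and against the machinery this paper deploys elsewhere (Lemmas~\ref{lemma:shift sequence}, \ref{lemma:portfolio return}, \ref{lemma:expectation field} and part (ii) of the proof of Theorem~\ref{theorem:CORN Universal}). Your first half is correct and is exactly the classical route: the conditional Kuhn--Tucker conditions give $\mathbb{E}\{\langle\mathbf{b},\mathbf{X}_n\rangle/\langle\mathbf{b}^*,\mathbf{X}_n\rangle\mid\mathcal{F}_{n-1}\}\le 1$, hence $V_n=S_n/S_n^*$ is a nonnegative supermartingale, it converges a.s.\ to a finite limit, and $\limsup_n\frac{1}{n}\log V_n\le 0$ follows.

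The growth-rate half, however, has a genuine gap, located precisely at the two points you flag and then wave off; neither can be repaired from the lemma's only hypothesis $\mathbb{E}\{|\log X^{(j)}|\}<\infty$. (a) Your Breiman functions $f_i$ are \emph{conditional maxima}, so the only available envelope is the Doob martingale $M_i=\mathbb{E}\{\sum_{j}|\log X_0^{(j)}|\mid\mathcal{G}_i\}$ with $\mathcal{G}_i=\sigma(\mathbf{X}_{-(i-1)}^{-1})$, and the domination condition $\mathbb{E}\sup_i|f_i|<\infty$ of Lemma~\ref{lemma:shift sequence} then requires integrability of the maximal function $\sup_i M_i$; for an $L^1$-bounded martingale this needs $L\log L$ integrability of the envelope (Doob's maximal inequality is sharp here), which is a strictly stronger moment assumption than the lemma grants. (b) The strong law for the martingale differences $d_i$ needs a Chow-type condition such as $\sum_i\mathbb{E}\{|d_i|^p\}/i^p<\infty$ for some $p>1$; an $L^1$ bound alone does \emph{not} imply a martingale-difference strong law (this is exactly where martingale differences differ from i.i.d.\ sequences), and a first logarithmic moment gives no variance control whatsoever. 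So, as written, your argument proves the lemma only under strengthened hypotheses, not as stated. Both obstacles disappear if you apply Breiman's lemma to \emph{realized log-returns} rather than to their conditional expectations: take $f_i(\mathbf{x}_{-\infty}^{\infty})=\log\langle\mathbf{b}_i^*(\mathbf{x}_{-(i-1)}^{-1}),\mathbf{x}_0\rangle$, which is dominated by the fixed integrable envelope $\sum_j|\log x_0^{(j)}|$, so that no supremum of conditional expectations ever appears; the conditional distributions $\mathbb{P}_{\mathbf{X}_0|\mathbf{X}_{-(i-1)}^{-1}}$ converge weakly a.s.\ to $\mathbb{P}_{\mathbf{X}_0|\mathbf{X}_{-\infty}^{-1}}$ by martingale convergence, so Lemma~\ref{lemma:portfolio return} gives $f_i\to f=\log\langle\mathbf{b}_\infty^*(\mathbf{x}_{-\infty}^{-1}),\mathbf{x}_0\rangle$ a.s.; then Lemma~\ref{lemma:shift sequence} yields $\frac{1}{n}\log S_n^*=\frac{1}{n}\sum_{i=1}^n f_i(T^i\mathbf{X})\to\mathbb{E}\{f\}=W^*$ directly, with Lemma~\ref{lemma:expectation field} identifying the limit. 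This is the same device the paper uses in part (ii) of its own theorem, and it removes the need for any martingale-difference decomposition.
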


Now, we give the universal theorem and its proof.

\begin{theorem}\label{theorem:CORN Universal}
The portfolio scheme {\rm CORN} is universal with respect to the class of all ergodic processes such that $\mathbb{E}\left\{\left|\log X^{\left(j\right)}\right|\right\}<\infty$, for $j=1,2,...,d$.
\end{theorem}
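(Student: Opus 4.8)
The plan is to follow the ``combination of elementary experts'' architecture of \citeN{GLU06}, specialized to correlation-based matching. First I would define, for each pair $(w,\rho)$ consisting of a window length $w\in\mathbb{N}$ and a correlation threshold $\rho\in(0,1)$, the elementary CORN expert $\mathbf{H}^{(w,\rho)}=\{\mathbf{h}^{(w,\rho)}_n(\cdot)\}$: at time $n$ this expert locates every past index $i$ whose length-$w$ window $\mathbf{X}_{i-w}^{i-1}$ is $\rho$-correlated with the current window $\mathbf{X}_{n-w}^{n-1}$, and plays the empirically log-optimal portfolio over the set of next-return vectors $\{\mathbf{X}_i\}$ so selected. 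The scheme {\rm CORN} is then the wealth-weighted aggregation
\[
\mathbf{b}_n=\frac{\sum_{w,\rho}q_{w,\rho}\,S_{n-1}(\mathbf{H}^{(w,\rho)})\,\mathbf{h}^{(w,\rho)}_n}{\sum_{w,\rho}q_{w,\rho}\,S_{n-1}(\mathbf{H}^{(w,\rho)})},
\]
taken over a countable grid of parameters with fixed prior weights $q_{w,\rho}>0$ summing to one, chosen so that the grid contains sequences with $\rho\to1$ and $w\to\infty$.

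The argument then splits into a lower and an upper bound. For the lower bound I would first fix $(w,\rho)$ and identify the almost-sure limit of the expert's growth rate. Writing the per-period log-return of $\mathbf{H}^{(w,\rho)}$ as a function of the shifted process and invoking Lemma~\ref{lemma:shift sequence} together with ergodicity, the empirical law of the selected next-return vectors converges to the conditional law of $\mathbf{X}_0$ given that $\mathbf{X}_{-w}^{-1}$ lies in the $\rho$-correlation neighborhood of the reference window, so that $\lim_n\frac1n\log S_n(\mathbf{H}^{(w,\rho)})=\epsilon(w,\rho)$ a.s., with $\epsilon(w,\rho)=\mathbb{E}\{\log\langle\mathbf{b}^*_{w,\rho},\mathbf{X}_0\rangle\}$ and $\mathbf{b}^*_{w,\rho}$ log-optimal for that conditional law. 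Next I would let $\rho\to1$: here Lemma~\ref{lemma:similarity} reinterprets the $\rho$-correlation constraint as an (approximate) Euclidean-proximity constraint, so that Lemma~\ref{lemma:lebesgue denisty theorem} guarantees that for almost every reference window these shrinking neighborhoods faithfully recover the local conditional distribution; the conditional laws then converge weakly, and Lemma~\ref{lemma:portfolio return} yields $\lim_{\rho\to1}\epsilon(w,\rho)=\mathbb{E}\{\max_{\mathbf{b}}\mathbb{E}[\log\langle\mathbf{b},\mathbf{X}_0\rangle\mid\mathbf{X}_{-w}^{-1}]\}=:\epsilon_w$. Finally, letting $w\to\infty$ so that $\sigma(\mathbf{X}_{-w}^{-1})\nearrow\sigma(\mathbf{X}_{-\infty}^{-1})$, Lemma~\ref{lemma:expectation field} gives $\lim_{w\to\infty}\epsilon_w=W^*$.

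The aggregation step is then routine. Since $S_n({\rm CORN})=\sum_{w,\rho}q_{w,\rho}S_n(\mathbf{H}^{(w,\rho)})\ge q_{w,\rho}S_n(\mathbf{H}^{(w,\rho)})$ for every expert, taking logarithms, dividing by $n$ and passing to the limit gives $\liminf_n\frac1n\log S_n({\rm CORN})\ge\epsilon(w,\rho)$ for each grid point $(w,\rho)$; combining with the three nested limits above yields $\liminf_n\frac1n\log S_n({\rm CORN})\ge W^*$ a.s. For the matching upper bound I would apply Lemma~\ref{lemma:universal} directly to the strategy {\rm CORN}, obtaining $\limsup_n\frac1n\log S_n({\rm CORN})\le W^*$ a.s. The two bounds together give $\lim_n\frac1n\log S_n({\rm CORN})=W^*$ almost surely, which is precisely universality.

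I expect the genuine obstacle to be the $\rho\to1$ limit. A correlation threshold carves out a cone-shaped region rather than a ball, so Lemma~\ref{lemma:lebesgue denisty theorem} does not apply to it verbatim; the entire force of Lemma~\ref{lemma:similarity} is to argue that a $\rho$-correlation constraint is, to leading order, a Euclidean-distance constraint, and one must control the error in that approximation carefully enough to be sure that the selected windows still concentrate at the reference window and that no positive-measure set of reference windows is systematically missed. Because Lemma~\ref{lemma:similarity} is only an \emph{approximate} equivalence, making the weak-convergence argument fully rigorous under this non-spherical geometry is the delicate (and least airtight) part of the proof. The remaining technical checks---in particular the uniform-integrability hypothesis $\mathbb{E}\sup_n|f_n|<\infty$ needed to invoke Lemma~\ref{lemma:shift sequence}, which rests on the moment condition $\mathbb{E}\{|\log X^{(j)}|\}<\infty$ assumed in the theorem---are standard specializations of the kernel-based analysis of \citeN{GLU06}.
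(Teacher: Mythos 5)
Your proposal follows essentially the same route as the paper's own proof: the same wealth-weighted aggregation over elementary experts $(w,\rho)$ giving $\liminf_n\frac1n\log S_n(\mathbf{B})\ge\sup_{w,\rho}\liminf_n W_n(\epsilon^{(w,\rho)})$, the same three nested limits (expert growth rate via Lemma~\ref{lemma:shift sequence} and Lemma~\ref{lemma:portfolio return}, then $\rho\to1$ via Lemma~\ref{lemma:similarity}, Lemma~\ref{lemma:lebesgue denisty theorem}, and Lemma~\ref{lemma:portfolio return} again, then $w\to\infty$ via the submartingale argument and Lemma~\ref{lemma:expectation field}), and the same appeal to Lemma~\ref{lemma:universal} for the upper bound. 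The delicate point you flag --- that the correlation constraint is a cone rather than a ball, so the Lebesgue density theorem applies only through the \emph{approximate} equivalence of Lemma~\ref{lemma:similarity} --- is equally present and equally unresolved in the paper's own argument, which invokes the density theorem on that approximate reformulation without further justification.
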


\begin{proof}
To prove the strategy {\rm CORN} is universal with respect to the class of all ergodic processes, we need to prove that if for each process in the class, $$\mathop {\lim }\limits_{n \to \infty } \frac{1}{n}\log {S_n}\left(\mathbf{B}\right) = {W^*}\quad  almost\  surely,$$
where $\mathbf{B}$ denote the strategy {\rm CORN}, and
$${W^*} =\mathop{\lim}\limits_{n \to \infty} \frac{1}{n}\log S_n^* = \mathbb{E}\left\{\mathop {\max}\limits_{\mathbf{b}\left( \cdot \right)} \mathbb{E}\left\{ \log \left\langle {\mathbf{b}\left(\mathbf{X}_{ - \infty }^{ - 1}\right), {\mathbf{X}_0}} \right\rangle |\mathbf{X}_{ - \infty }^{ - 1}\right\} \right\}.$$
We divide the proof into three parts.

(i).\quad According to the Lemma~\ref{lemma:universal}, we know that $\mathop {\lim }\limits_{n \to \infty } \left(\frac{1}{n}\log {S_n} - \frac{1}{n}\log S_n^*\right) \le 0$, then $\mathop {\lim}\limits_{n \to \infty } \frac{1}{n}\log S_n \leq \mathop {\lim}\limits_{n \to \infty } \frac{1}{n}\log S_n^*=W^*$. So it suffices to prove that
$$\mathop {\lim \inf }\limits_{n \to \infty } {W_n}\left(\mathbf{B}\right) = \mathop {\lim \inf }\limits_{n \to \infty } \frac{1}{n}\log {S_n}\left(\mathbf{B}\right) \ge {W^*} \quad {\rm almost \ surely}.$$

Without loss of generality we may assume $S_0=1$, so that
\begin{eqnarray*}
  W_n(\mathbf{B}) &=& \frac{1}{n}\log S_n\left(\mathbf{B}\right) \\
                  &=& \frac{1}{n}\log \left(\mathop{\sum} \limits_{\omega,\rho} q_{\omega,\rho} S_n(\epsilon ^{(\omega,\rho)})\right)\\
             & \geq & \frac{1}{n}\log \left(\mathop{\sup} \limits_{\omega,\rho} q_{\omega,\rho} S_n(\epsilon ^{(\omega,\rho)})\right)\\
                  &=& \frac{1}{n}\mathop{\sup} \limits_{\omega,\rho}(\log q_{\omega,\rho}+\log S_n(\epsilon ^{(\omega,\rho)}))\\
                  &=& \mathop{\sup} \limits_{\omega,\rho} \left(W_n(\epsilon ^{(\omega,\rho)})+\frac{\log q_{\omega,\rho}}{n} \right).
\end{eqnarray*}
Thus
\begin{eqnarray}
  \mathop {\lim \inf} \limits_{n \to \infty}W_n(\mathbf{B}) &=&  \mathop {\lim \inf} \limits_{n \to \infty} \mathop{\sup} \limits_{\omega,\rho} \left(W_n(\epsilon ^{(\omega,\rho)})+\frac{\log q_{\omega,\rho}}{n} \right) \nonumber\\
  &\geq & \mathop{\sup} \limits_{\omega,\rho} \mathop {\lim \inf} \limits_{n \to \infty} \left(W_n(\epsilon ^{(\omega,\rho)})+\frac{\log q_{\omega,\rho}}{n} \right) \nonumber\\
  &=& \mathop{\sup} \limits_{\omega,\rho} \mathop {\lim \inf} \limits_{n \to \infty} W_n(\epsilon ^{(\omega,\rho)}).
\end{eqnarray}

The simple argument above shows that the asymptotic rate of growth of the strategy $\mathbf{B}$ is at least as large the supremum of the rates of growth of all elementary strategies $\epsilon ^{(\omega,\rho)}$. Thus, to estimate $\mathop {\lim \inf}_{n \to \infty}W_n(\mathbf{B})$, it suffices to investigate the performance of expert $\epsilon ^{(\omega,\rho)}$ on the stationary and ergodic market sequence $\mathbf{X}_0,\mathbf{X}_{-1},\mathbf{X}_{-2},...$.

(ii)\quad First, let the integers $\omega,\rho$ and the vector $\mathbf{s}=\mathbf{s}_{-\omega}^{-1} \in \mathbb{R}_{+}^{d\omega}$ be fixed. Form the Lemma~\ref{lemma:similarity}, we can get that the set $\{X_i:1-j+\omega\leq i\leq 0, \frac{{{\mathop{\rm cov}} (\mathbf{X}_{i-\omega}^{i-1},\mathbf{s})}}{{\sqrt {Var(\mathbf{X}_{i-\omega}^{i-1})} \sqrt {Var(\mathbf{s})} }} \ge \rho\}$ can be expressed as $\{X_i:1-j+\omega\leq i\leq 0, \mathbb{E}\{(\mathbf{X}_{i-\omega}^{i-1}-\mathbf{s})^2\} \leq 2{\rm Var}(\mathbf{s})(1-\rho)$.

Let $\mathbb{P}_{j,\mathbf{s}}^{(\omega,\rho)}$ denote the (random) measure concentrated on $\{X_i:1-j+\omega\leq i\leq 0, \mathbb{E}\{(\mathbf{X}_{i-\omega}^{i-1}-\mathbf{s})^2\} \leq 2{\rm Var}(\mathbf{s})(1-\rho)$ defined by
$$\mathbb{P}_{j,\mathbf{s}}^{(\omega,\rho)}(A)= \frac{\sum \limits_{i:1-j+\omega\leq i\leq 0, \mathbb{E}\{(\mathbf{X}_{i-\omega}^{i-1}-\mathbf{s})^2\} \leq 2{\rm Var}(\mathbf{s})(1-\rho)}II_A(\mathbf{X}_i)}{|\{i:1-j+\omega\leq i\leq 0, \mathbb{E}\{(\mathbf{X}_{i-\omega}^{i-1}-\mathbf{s})^2\} \leq 2{\rm Var}(\mathbf{s})(1-\rho)\}|},\quad A \subset \mathbb{R}_+^d$$
where $II_A$ denotes the indictor of function of the set $A$. If the above set of $\mathbf{X}_i's$ is empty, then let $\mathbb{P}_{j,\mathbf{s}}^{(\omega,\rho)}=\delta_{(1,...,1)}$ be the probability measure concentrated on the vector $(1,...,1)$. In other words, $\mathbb{P}_{j,\mathbf{s}}^{(\omega,\rho)}(A)$ is the relative frequency of the vectors among $\mathbf{X}_{1-j+\omega},...,\mathbf{X}_0$ that fall in the set $A$.

Observe that for all $\mathbf{s}$, without probability one,
$$\mathbb{P}_{j,s}^{(\omega,\rho )} \to \mathbb{P}_s^{*(\omega,\rho )} =
\left\{ \begin{array}{ll}
 {\mathbb{P}_{ {\mathbf{X}_0}|\mathbb{E}\{ {{(\mathbf{X}_{- \omega}^{- 1} - \mathbf{s})}^2}\}  \le 2Var(\mathbf{s})(1-\rho) }}& {\rm if}\ \mathbb{P}(\mathbb{E}\{ {(\mathbf{X}_{ - \omega}^{ - 1} - \mathbf{s})^2}\}  \le 2Var(\mathbf{s})(1 - \rho )) > 0 \\
 {\delta _{(1,...,1)}}& {\rm if}\ \mathbb{P}(\mathbb{E}\{ {(\mathbf{X}_{ - \omega}^{ - 1} - \mathbf{s})^2}\}  \le 2Var(\mathbf{s})(1 - \rho )) = 0
\end{array} \right. \eqno(2)$$
weakly as $j \to \infty$ where $\mathbb{P}_s^{*(\omega,\rho )}$ denote the limit distribution of $\mathbb{P}_{j,s}^{(\omega,\rho )}$, $\mathbb{P}_{ {\mathbf{X}_0}|\mathbb{E}\{ {{(\mathbf{X}_{- \omega}^{- 1} - \mathbf{s})}^2}\}  \le 2Var(\mathbf{s})(1-\rho) }$ denotes the distribution of the vector $\mathbf{X}_0$ conditioned on the event $\mathbb{E}\{ {{(\mathbf{X}_{- \omega}^{- 1} - \mathbf{s})}^2}\}\le 2Var(\mathbf{s})(1 - \rho ))$. To see this, let $f$ be a bounded continuous function defined on $\mathbb{R}_+^d$. Then the ergodic theorem implies that if $\mathbb{P}\left(\mathbb{E}\{ {(\mathbf{X}_{ - \omega}^{ - 1} - \mathbf{s})^2}\}  \le 2Var(\mathbf{s})(1 - \rho )\right) > 0$, then
$$\begin{array}{cl}
  \int f(\mathbf{x})\mathbb{P}_{j,s}^{*(\omega,\rho )}(d\mathbf{x})&= \frac{\frac{1}{|1-j+\omega|}\sum \limits_{i:1-j+\omega\leq i\leq 0, \mathbb{E}\{(\mathbf{X}_{i-\omega}^{i-1}-\mathbf{s})^2\} \leq 2{\rm Var}(\mathbf{s})(1-\rho)}f(\mathbf{X}_i)}{\frac{1}{|1-j+\omega|}|\{i:1-j+\omega\leq i\leq 0, \mathbb{E}\{(\mathbf{X}_{i-\omega}^{i-1}-\mathbf{s})^2\} \leq 2{\rm Var}(\mathbf{s})(1-\rho)\}|} \\
   &\to \frac{E\{f(X_0)II_{\{\mathbb{E}\{ {{(\mathbf{X}_{- \omega}^{- 1} - \mathbf{s})}^2}\}  \le 2Var(\mathbf{s})(1-\rho) \}}\}}{\mathbb{P}\{\mathbb{E}\{ {{(\mathbf{X}_{- \omega}^{- 1} - \mathbf{s})}^2}\}  \le 2Var(\mathbf{s})(1-\rho)\}} \\
   &= E\{f(X_0)|\mathbb{E}\{ {{(\mathbf{X}_{- \omega}^{- 1} - \mathbf{s})}^2}\}  \le 2Var(\mathbf{s})(1-\rho) \}\\
   &= \int{ f(\mathbf{x})\mathbb{P}_{ {\mathbf{X}_0}|\mathbb{E}\{ {{(\mathbf{X}_{- \omega}^{- 1} - \mathbf{s})}^2}\}  \le 2Var(\mathbf{s})(1-\rho) }}\quad {\rm almost\ surely,\ as}\ j \to \infty.
\end{array}$$
On the other hand, if $\mathbb{P}\left(\mathbb{E}\{ {(\mathbf{X}_{ - \omega}^{ - 1} - \mathbf{s})^2}\}  \le 2Var(\mathbf{s})(1 - \rho )\right) = 0$, then with probability one $\mathbb{P}_{j,\mathbf{s}}^{(\omega,\rho)}$ is concentrated on $(1,...,1)$ for all j, and
$$\int f(\mathbf{x})\mathbb{P}_{j,s}^{(\omega,\rho )}(d\mathbf{x})=f(1,...,1).$$

Recall that by definition, $\mathbf{b}^{(\omega,\rho)}(\mathbf{X}_{1-j}^{-1},\mathbf{s})$ is a log-optimal portfolio with respect to the probability measure $\mathbb{P}_{j,\mathbf{s}}^{(\omega,\rho)}$. Let $\mathbf{b}_{\omega,\rho}^*(\mathbf{s})$ denote a log-optimal portfolio with a respect to the limit distribution $\mathbb{P}_s^{*(\omega,\rho)}$. Then, using Lemma~\ref{lemma:portfolio return}, we infer from equation~(2) that, as $j$ tends to infinity, we have the almost sure convergence
$$\mathop {\lim }\limits_{j \to \infty } \left\langle \mathbf{b}^{(\omega,\rho)}(\mathbf{X}_{1-j}^{-1},\mathbf{s}),\mathbf{x}_0 \right\rangle=\left\langle \mathbf{b}_{\omega,\rho}^*(\mathbf{s}),\mathbf{x}_0 \right\rangle,$$
for $\mathbb{P}_s^{*(\omega,\rho)}$ -almost all $\mathbf{x}_0$ and hence for $\mathbb{P}_{\mathbf{X}_0}$-almost all $\mathbf{x}_0$. Since {\bf s} was arbitrary, we obtain
$$\mathop {\lim }\limits_{j \to \infty } \left\langle \mathbf{b}^{(\omega,\rho)}(\mathbf{X}_{1-j}^{-1},\mathbf{\mathbf{X}_{-\omega}^{-1}}),\mathbf{x}_0 \right\rangle=\left\langle \mathbf{b}_{\omega,\rho}^*(\mathbf{X}_{-\omega}^{-1}),\mathbf{x}_0 \right\rangle \quad {\rm almost}\ {\rm surely}, \eqno(3)$$

Next, we apply Lemma~\ref{lemma:shift sequence} for the function
$$f_i(\mathbf{x}_{-\infty}^{\infty})=\log\left\langle \mathbf{h}^{(\omega,\rho)}(\mathbf{x}_{1-i}^{-1}),\mathbf{x}_0\right\rangle=\log\left\langle \mathbf{b}^{(\omega,\rho)}(\mathbf{x}_{1-i}^{-1},\mathbf{\mathbf{x}_{-\omega}^{-1}}),\mathbf{x}_0 \right\rangle$$
defined on $\mathbf{x}_{-\infty}^{\infty}=(...,\mathbf{x}_{-1},\mathbf{x}_0,\mathbf{x}_{1})$. Note that
$$\left| f_i(\mathbf{X}_{-\infty}^{\infty}) \right|=\left|\log\left\langle \mathbf{h}^{(\omega,\rho)}(\mathbf{X}_{1-i}^{-1}),\mathbf{x}_0\right\rangle \right| \leq \sum \limits_{j=1}^d \left|\log X_0^{(j)}\right|,$$
which has finite expectation, and
$$f_i(\mathbf{X}_{-\infty}^{\infty}) \to \left\langle b_{\omega,\rho}^*(X_{-\omega}^{-1}),X_0 \right\rangle \quad {\rm almost \ surely \ as}\ i \to \infty$$
by equation~(3). As $n \to \infty$, Lemma~\ref{lemma:shift sequence} yields
$$\begin{array}{cl}
  W_n(\mathbf{\epsilon}^{(\omega,\rho)}) &= \frac{1}{n}\sum \limits_{i=1}^n \log \left\langle \mathbf{h}^{(\omega,\rho)}(\mathbf{X}_1^{i-1}),\mathbf{X}_i \right\rangle \\
   &= \frac{1}{n}\sum \limits_{i=1}^n f_i(T^iX_{-\infty}^{\infty}) \\
   &\to \mathbb{E}\left\{ \log \mathbf{b}_{\omega,\rho}^*(X_{-\omega}^{-1}),X_0\right\} \\
   &\mathop{=}\limits^{def} \theta_{\omega,\rho} \quad {\rm almost \ surely}.
\end{array}$$

Therefore, by equation~(1) we have
$$\mathop {\lim \inf} \limits_{n\to \infty}W_n(\mathbf{B})\geq \sup \limits_{\omega,\rho}\theta_{\omega,\rho} \geq \sup \limits_{\omega}      \mathop {\lim \inf} \limits_{\rho}\theta_{\omega,\rho} \quad {\rm almost \ surely},$$
and it suffices to show that the right-hand side is at least $W^*$.

(iii)\quad To this end, first, define, for Borel sets $A,B \subset \mathbb{R}_+^d$,
$$m_A(z)=\mathbb{P}\left\{ \mathbf{X}_0 \in A | \mathbf{X}_{-\omega}^{-1}=z\right\}$$
and
$$\mu_\omega(B)=\mathbb{P}\left\{\mathbf{X}_{-\omega}^{-1}\in B\right\}.$$

Then, for any $\mathbf{s} \in {\rm support}(\mu_{\omega})$, and for all $A$,
$$\begin{array}{cl}
    \mathbb{P}_s^{*(\omega,\rho)}(A) & = \mathbb{P}\{{ {\mathbf{X}_0} \in A|\mathbb{E}\{ {{(\mathbf{X}_{- \omega}^{- 1} - \mathbf{s})}^2}\}  \le 2Var(\mathbf{s})(1-\rho) }\} \\
    &= \frac{\mathbb{P}\{{ {\mathbf{X}_0} \in A,\mathbb{E}\{ {{(\mathbf{X}_{- \omega}^{- 1} - \mathbf{s})}^2}\}  \le 2Var(\mathbf{s})(1-\rho) }\}}{\mathbb{P}\{{ \mathbb{E}\{ {{(\mathbf{X}_{- \omega}^{- 1} - \mathbf{s})}^2}\}  \le 2Var(\mathbf{s})(1-\rho) }\}}\\
    &= \frac{1}{\mu_\omega(S_{s,2Var(\mathbf{s})(1-\rho)})} \int_{S_{s,2Var(\mathbf{s})(1-\rho)}} m_A(z)\mu_\omega(dz)\\
    &\to m_A(\mathbf{s})=\mathbb{P}\{X_0 \in A|\mathbf{X}_{-\omega}^{-1}=\mathbf{s}\}
  \end{array}
$$
as $\rho \to 1$ and for $\mu_\omega$-almost all {\bf s} by Lebesgue density theorem(see Lemma~\ref{lemma:lebesgue denisty theorem} or see \cite{Gyorfi2002} Lemma24.5), and therefore
$$\mathbb{P}_{\mathbf{X}_{-\omega}^{-1}}^{*(\omega,\rho)}(A) \to \mathbb{P}\{ \mathbf{X}_0 \in A|\mathbf{X}_{-\omega}^{-1}\}$$
as $\rho \to 1$ for all $A$. Thus, using Lemma~\ref{lemma:portfolio return} again, we have
$$\begin{array}{ccl}
    \mathop{\lim \inf} \limits_{\rho} \theta_{\omega,\rho} & = & \lim \limits_{\rho} \theta_{\omega,\rho} \\
     & = & \lim \limits_{\rho}\mathbb{E}\left\{ \log \mathbf{b}_{\omega,\rho}^*(X_{-\omega}^{-1}),X_0\right\}\\
     & = & \mathbb{E}\{\log \left\langle \mathbf{b}_\omega^*(X_{-\omega}^{-1}),\mathbf{X}_0\right\rangle \} \\
     &   & {(\rm where}\ \mathbf{b}_\omega^*(\cdot)\ {\rm is\ the\ log}-{\rm optimum\ portfolio\ with\ respect} \\
     &   &{\rm to\ the\ conditional\ probability\ } {\mathbb{P}\{\mathbf{X}_0 \in A|\mathbf{X}_{-\omega}^{-1}\})} \\
     & = & \mathbb{E}\left\{\max \limits_{\mathbf{b}(\cdot)}\mathbb{E}\{\log \left\langle \mathbf{b}(\mathbf{X}_{-\omega}^{-1}),\mathbf{X}_0\right\rangle|\mathbf{X}_{-\omega}^{-1}\}\right\} \\
     & = & \mathbb{E}\left\{\mathbb{E}\{\log \left\langle \mathbf{b}_{\omega}^*(\mathbf{X}_{-\omega}^{-1}),\mathbf{X}_0\right\rangle|\mathbf{X}_{-\omega}^{-1}\}\right\} \\
     & \mathop {=} \limits^{def} & \theta_\omega^*.
  \end{array}
$$

Next, to finish the proof, we appeal to the submartingale convergence theorem. First note the sequence
$$Y_\omega \mathop{=} \limits^{def}\mathbb{E}\{\log \left\langle \mathbf{b}_{\omega}^*(\mathbf{X}_{-\omega}^{-1}),\mathbf{X}_0\right\rangle|\mathbf{X}_{-\omega}^{-1}\}=\max \limits_{\mathbf{b}(\cdot)}\mathbb{E}\{\log \left\langle \mathbf{b}(\mathbf{X}_{-\omega}^{-1}),\mathbf{X}_0\right\rangle|\mathbf{X}_{-\omega}^{-1}\}$$
of random variables forms a submartingale, that is, $\mathbb{E}\{Y_{\omega+1}|Y_{-\omega}^{-1} \geq Y_\omega\}$. To see this, note that
$$
\begin{array}{cl}
  \mathbb{E}\{Y_{\omega+1}|\mathbf{X}_{-\omega}^{-1}\} & =\mathbb{E}\{\mathbb{E}\{\log \left\langle \mathbf{b}_{\omega+1}^*(\mathbf{X}_{-\omega-1}^{-1}),\mathbf{X}_0\right\rangle|\mathbf{X}_{-\omega-1}^{-1}\}|\mathbf{X}_{-\omega}^{-1}\} \\
  & \geq \mathbb{E}\{\mathbb{E}\{\log \left\langle \mathbf{b}_{\omega}^*(\mathbf{X}_{-\omega}^{-1}),\mathbf{X}_0\right\rangle|\mathbf{X}_{-\omega-1}^{-1}\}|\mathbf{X}_{-\omega}^{-1}\}\\
  &=\mathbb{E}\{\log \left\langle \mathbf{b}_{\omega}^*(\mathbf{X}_{-\omega}^{-1}),\mathbf{X}_0\right\rangle|\mathbf{X}_{-\omega-1}^{-1}\}\\
  &=Y_\omega.
\end{array}
$$
This sequence is bounded by
$$\max \limits_{\mathbf{b}(\cdot)}\mathbb{E}\{\log \left\langle \mathbf{b}(\mathbf{X}_{-\infty}^{-1}),\mathbf{X}_0\right\rangle|\mathbf{X}_{-\infty}^{-1}\},$$
which has a finite expectation. The submartingale convergence theorem(see Stout1974) implies that submartingale is convergence almost surely, and $\sup_\omega \theta_\omega^*$ is finite. In particular, by the submartingale property, $\theta_\omega^*$ is a bounded increasing sequence, so that
$$\sup \limits_\omega \theta_\omega^*=\lim \limits_{\omega \to \infty} \theta_\omega^*.$$

Applying Lemma~\ref{lemma:expectation field} with the $\sigma-$algebras
$$\sigma(\mathbf{X}_{-\omega}^{-1}) \nearrow \sigma(\mathbf{X}_{-\infty}^{-1})$$
yields
$$
\begin{array}{cl}
  \sup \limits_\omega \theta_\omega^* & =\lim \limits_{\omega \to \infty} \mathbb{E}\left\{\max \limits_{\mathbf{b}(\cdot)}\mathbb{E}\{\log \left\langle \mathbf{b}(\mathbf{X}_{-\omega}^{-1}),\mathbf{X}_0\right\rangle|\mathbf{X}_{-\omega}^{-1}\}\right\} \\
   & = \mathbb{E}\left\{\max \limits_{\mathbf{b}(\cdot)}\mathbb{E}\{\log \left\langle \mathbf{b}(\mathbf{X}_{-\infty}^{-1}),\mathbf{X}_0\right\rangle|\mathbf{X}_{-\infty}^{-1}\}\right\}\\
   & =W^*.
\end{array}
$$
Then
$$\mathop {\lim \inf} \limits_{n\to \infty}W_n(\mathbf{B})\geq \sup \limits_{\omega,\rho}\theta_{\omega,\rho} \geq \sup \limits_{\omega}      \mathop {\lim \inf} \limits_{\rho}\theta_{\omega,\rho}= \sup \limits_\omega \theta_\omega^*=W^* \quad {\rm almost \ surely},$$
and from the above three parts of proof, we can get that
$$\mathop {\lim }\limits_{n \to \infty } \frac{1}{n}\log {S_n}\left(\mathbf{B}\right) = {W^*}\quad  {\rm almost\  surely}$$
and the proof of Theorem~\ref{theorem:CORN Universal} is finished.
\end{proof}

\bibliographystyle{ACM-Reference-Format-Journals}
\bibliography{CORN_Proof}


\begin{thebibliography}{00}


\ifx \showCODEN    \undefined \def \showCODEN     #1{\unskip}     \fi
\ifx \showDOI      \undefined \def \showDOI       #1{{\tt DOI:}\penalty0{#1}\ }
  \fi
\ifx \showISBNx    \undefined \def \showISBNx     #1{\unskip}     \fi
\ifx \showISBNxiii \undefined \def \showISBNxiii  #1{\unskip}     \fi
\ifx \showISSN     \undefined \def \showISSN      #1{\unskip}     \fi
\ifx \showLCCN     \undefined \def \showLCCN      #1{\unskip}     \fi
\ifx \shownote     \undefined \def \shownote      #1{#1}          \fi
\ifx \showarticletitle \undefined \def \showarticletitle #1{#1}   \fi
\ifx \showURL      \undefined \def \showURL       #1{#1}          \fi

\bibitem[\protect\citeauthoryear{Algoet}{Algoet}{1992}]%
        {Algoet1992}
{P. Algoet}. 1992.
\newblock \showarticletitle{Universal Schemes for Prediction, Gambling, and
  Portfolio Selection}.
\newblock {\em Ann. Prob.\/}  {20} (1992), 901--941.
\newblock


\bibitem[\protect\citeauthoryear{Algoet}{Algoet}{1994}]%
        {Algoet1994}
{P. Algoet}. 1994.
\newblock \showarticletitle{The Strong Law of Large Numbers for Sequential
  Decisions under Uncertainity}.
\newblock {\em IEEE Trans. Inform. Theory\/}  {40} (1994), 609--634.
\newblock


\bibitem[\protect\citeauthoryear{Algoet and Cover}{Algoet and Cover}{1988}]%
        {Algoet1988}
{P. Algoet} {and} {T. Cover}. 1988.
\newblock \showarticletitle{Asymptotic Optimality Asymptotic Equipartition
  Properties of Log-Optimum Investments}.
\newblock {\em Ann. Prob.\/}  {16} (1988), 876--898.
\newblock


\bibitem[\protect\citeauthoryear{Breiman}{Breiman}{1960}]%
        {Breiman1957}
{L. Breiman}. 1957(\ Correction version 1960).
\newblock \showarticletitle{The Individual Ergodic Theorem of Information
  Theory}.
\newblock {\em Ann. Math. Stat\/}  {31} (1957(\ Correction version 1960)),
  809--811.
\newblock


\bibitem[\protect\citeauthoryear{Gy$\ddot{\rm o}$rfi, Kohler, Krzyzak, and
  Walk}{Gy$\ddot{\rm o}$rfi et~al\mbox{.}}{2002}]%
        {Gyorfi2002}
{L. Gy$\ddot{\rm o}$rfi}, {M. Kohler}, {A. Krzyzak}, {and} {H. Walk} (Eds.).
  2002.
\newblock {\em A distribution-Free Theory of Nonparametric Regression}.
\newblock Springer, New York.
\newblock


\bibitem[\protect\citeauthoryear{Gy{\"o}rfi, Lugosi, and Udina}{Gy{\"o}rfi
  et~al\mbox{.}}{2006}]%
        {GLU06}
{L. Gy{\"o}rfi}, {G. Lugosi}, {and} {F. Udina}. 2006.
\newblock \showarticletitle{Nonparametric Kernel-Based Sequential Investment
  Strategies}.
\newblock {\em Mathematical Finance\/} {16}, 2 (2006), 337--357.
\newblock


\bibitem[\protect\citeauthoryear{Li, Hoi, and Gopalkrishnan}{Li
  et~al\mbox{.}}{2011}]%
        {LHG11}
{B. Li}, {S.~C.H. Hoi}, {and} {V. Gopalkrishnan}. 2011.
\newblock \showarticletitle{CORN: Correlation-driven Nonparametric Learning
  Approach for Portfolio Selection}.
\newblock {\em ACM Transactions on Intelligent Systems and Technology\/} {2}, 3
  (2011), 21:1--21:29.
\newblock


\end{thebibliography}

\end{document}